\documentclass[
  10pt,
  twocolumn,
  english,
  aps,
  pra,
  longbibliography,
  superscriptaddress,
  floatfix,
  groupedaddress
]{revtex4-1}

\usepackage{amsmath}
\usepackage{amssymb}
\usepackage{mathtools}
\usepackage{graphicx}
\usepackage{physics}
\usepackage[dvipsnames]{xcolor}
\usepackage{amsthm}
\usepackage{float}

\usepackage[english]{babel}

\usepackage{silence}
\usepackage[colorlinks=true,urlcolor=blue,citecolor=blue,linkcolor=blue]{hyperref}

\mathtoolsset{showonlyrefs}

\usepackage{comment}

\newtheorem{theorem}{Theorem}[section]
\newtheorem{remark}{Remark}[section]

\begin{document}

\title{Binary Optimization with Complex Constraints\\
via Quantum Approximate Multi-Objective Optimization}
\author{Andres Ruiz}
\affiliation{IBM Research}

\author{Soumyadip Ghosh}
\affiliation{IBM Research}

\author{Stefan Woerner}
\affiliation{IBM Research}

\begin{abstract}
We show that a class of binary optimization problems with complex non-quadratic objectives or constraints can be reformulated as multi-objective quadratic unconstrained binary optimization problems. When the objective and constraints depend on a small number of quadratic features and are monotone with respect to their preferred directions, at least one globally optimal solution lies in the Pareto set of the associated MO-QUBO. This enables the constraints to be evaluated classically on Pareto-optimal candidates rather than encoded as penalties. We demonstrate the approach for binary portfolio optimization under a Conditional Value-at-Risk constraint. Using Quantum Approximate Multi-Objective Optimization on an illustrative 100-asset instance, we approximate the mean-variance Pareto front using an IBM Quantum computer and derive mean-CVaR fronts through classical post-processing. The hardware results recover the overall structure of the classical front and yield near-optimal feasible portfolios for different risk bounds.
\end{abstract}

\date{\today}

\maketitle

\section{Introduction}

We show how a class of binary optimization problems with complex non-quadratic objectives or constraints can be reformulated as multi-objective quadratic unconstrained binary optimization (MO-QUBO) problems. 
Such objectives and constraints can be difficult to incorporate directly into a single-objective QUBO, as their representation through penalties may require auxiliary variables, additional approximations, or large penalty coefficients \cite{Lucas2014,GloverKochenbergerDu2019}. 
Here, we consider problems whose objective and constraints depend on the binary decision variables through a small number of quadratic feature functions, while the remaining dependence on these features may be non-quadratic. 

If the objective and constraints are monotone with respect to the preferred feature directions, then at least one globally optimal solution of the original constrained problem lies in the Pareto set of the associated MO-QUBO. 
The original problem can therefore be solved by computing the Pareto set, evaluating the non-quadratic objective and constraints classically on its elements, and selecting the best feasible solution. This avoids encoding the complex parts of the problem directly into the quantum optimization model. The framework can also be extended to constrained problems with multiple original objectives and to higher-order feature functions. 

The reduction follows from standard Pareto dominance arguments in multi-objective optimization: if the objective and feasibility are monotone with respect to the induced dominance order, dominated solutions are not needed to attain the global optimum \cite{Miettinen1999,Ehrgott2005}. Its relevance here is algorithmic, as it replaces a complex constrained optimization problem with the task of approximating the Pareto set of a collection of simpler objective functions. This task can be addressed, for example, using the Quantum Approximate Multi-Objective Optimization algorithm (QAMOO), which employs QAOA-style circuits to approximate Pareto fronts of multi-objective combinatorial optimization problems \cite{FarhiGoldstoneGutmann2014,Kotil2025QAMOO}. If the complete Pareto set is available, the reduction preserves global optimality. If only an approximation is available, the resulting procedure is heuristic and does not, in general, guarantee recovery of a globally optimal solution. For broader perspecctives on quantum optimization and its systematic benchmarking, see Refs.~\cite{Abbas2024QuantumOptimization, Koch2026QOBLIB}

We demonstrate the approach for binary portfolio optimization under a Conditional Value-at-Risk (CVaR) constraint. In this setting, the non-quadratic constraint depends only on two quadratic features: expected return and portfolio variance. The corresponding mean-variance Pareto set can therefore be approximated using a quantum algorithm, while the CVaR constraint is evaluated through classical post-processing. Under the same Gaussian return assumption, the approach applies analogously to Value-at-Risk (VaR) constraints \cite{RockafellarUryasev2000,McNeilFreyEmbrechts2015}. The approach can also be extended to more complex settings, e.g., including cardinality or budget constraints \cite{Lucas2014, vazquez2026efficientfourierbasedlinearcombination}.

\section{Problem Formulation}
\label{sec:problem_formulation}

We consider a class of constrained binary optimization problems whose objective and constraints are expressed through a collection of quadratic features.
Let $X\subseteq\{0,1\}^n$ denote the feasible binary search space. 
The set $X$ may be the full hypercube $\{0,1\}^n$, but it may also encode constraints that are easily incorporated into a QUBO formulation, such as strict cardinality constraints \cite{Lucas2014, GloverKochenbergerDu2019}. 
Here, our focus is on more general constraints that do not naturally admit a quadratic penalty representation.

Let \(h_i:X\rightarrow\mathbb{R},\; i=1,\ldots,m,\) denote feature functions. 
We assume each $h_i$ is at most quadratic in the binary decision variables, so that it can be represented as a QUBO objective, although extensions to higher-order terms are possible.
Then, the optimization problem of interest is
\begin{align}
\max_{x \in X} & \quad F(h_1(x),\ldots,h_m(x))
\label{eq:original_problem}\\
\text{s.t.}\;& \quad
G_j(h_1(x),\ldots,h_m(x)) \geq 0,
\quad j=1,\ldots,J.
\end{align}
We write these compactly as $F(h(x))$ and $G_j(h(x))$.

The key idea is to reinterpret \eqref{eq:original_problem} as a multi-objective optimization problem. 
For each feature $h_i$, let $s_i\in\{-1,+1\}$ encode its preferred direction, where the signs are chosen such that the objective $F$ and all constraint functions $G_j$ are monotone nondecreasing with respect to the induced dominance order. 
We then define \[ Z_i(x)=s_i h_i(x), \qquad i=1,\ldots,m. \] 
This yields the multi-objective QUBO (MO-QUBO)
\begin{align}
\max_{x\in X} \quad \{Z_1(x),\ldots,Z_m(x)\}.
\label{eq:moqubo}
\end{align}
If choosing consistent preferred directions for all features is not possible, the assumptions underlying the reduction are not satisfied and the theory developed below does not apply. The resulting MO-QUBO formulation can nevertheless still be used heuristically.

The reduction relies on the observation that, for monotone objectives and constraints, any dominated solution in the transformed feature space can be replaced by a dominating solution without decreasing the objective or worsening feasibility. 
We say that $x'\in X$ dominates $x\in X$ if $Z_i(x')\ge Z_i(x)$ for all $i$, with strict inequality for at least one component. 
The Pareto set $\mathcal P\subseteq X$ consists of all non-dominated solutions. 
The next theorem makes this reduction precise.

\begin{theorem}
\label{thm:pareto_reduction}
Let $X$ be finite, and let $\mathcal P$ denote the Pareto set of the
MO-QUBO \eqref{eq:moqubo}.
Assume that $F$ and all $G_j$ are monotone nondecreasing with respect to the dominance order induced by $Z = (Z_1, \ldots, Z_m)$, i.e., whenever $Z(x')\ge Z(x)$ componentwise, then
\[
F(h(x'))\ge F(h(x))
\]
and
\[
G_j(h(x'))\ge G_j(h(x)),
\qquad j=1,\ldots,J.
\]
If \eqref{eq:original_problem} is feasible, then at least one globally optimal solution belongs to $\mathcal P$. 
Equivalently,
\[
\max_{\substack{x\in X\\ G_j(h(x))\ge 0,\ \forall j}} F(h(x))
\quad=
\max_{\substack{x\in\mathcal P\\ G_j(h(x))\ge 0,\ \forall j}} F(h(x)).
\]
\end{theorem}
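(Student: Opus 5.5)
The plan is a standard finite-domination argument. Since $X$ is finite and \eqref{eq:original_problem} is feasible, the feasible set $X_G=\{x\in X: G_j(h(x))\ge 0\ \forall j\}$ is finite and nonempty. Hence the maximum of $F(h(\cdot))$ over $X_G$ is attained; let $x^\ast$ be a maximizer. If $x^\ast\in\mathcal P$ we are done, so the substantive case is when $x^\ast$ is dominated.

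In that case I will construct a Pareto-optimal point dominating $x^\ast$. Among all $x\in X$ with $Z(x)\ge Z(x^\ast)$ componentwise, I pick one maximizing the scalar $\sum_i Z_i(x)$; such a maximizer exists because the set is finite and contains $x^\ast$. Call it $\bar x$. I then check that $\bar x\in\mathcal P$. Suppose some $x'$ dominated $\bar x$. Then $Z(x')\ge Z(\bar x)\ge Z(x^\ast)$, so $x'$ lies in the same set, and $\sum_i Z_i(x')>\sum_i Z_i(\bar x)$ because at least one inequality is strict. This contradicts the choice of $\bar x$. An alternative route is to follow a chain of strict dominations, which must terminate because strict dominance is acyclic (it strictly increases $\sum_i Z_i$) and $X$ is finite. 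The sum argument avoids any induction.

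Next I use monotonicity. Since $Z(\bar x)\ge Z(x^\ast)$ componentwise, the hypothesis gives $G_j(h(\bar x))\ge G_j(h(x^\ast))\ge 0$ for all $j$, so $\bar x$ is feasible. It also gives $F(h(\bar x))\ge F(h(x^\ast))$. Optimality of $x^\ast$ then forces equality, so $\bar x\in\mathcal P$ is a globally optimal solution.

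For the displayed equality, the right-hand maximum runs over a subset of the left-hand feasible set, so it is at most the left-hand side. The feasible point $\bar x\in\mathcal P$ attains the left-hand value, so the right-hand set is nonempty and the two maxima coincide.

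There is no real obstacle here. The only point needing care is the existence of a Pareto point dominating (or equal to) $x^\ast$, which is where finiteness of $X$ is used. One must also note that the monotonicity hypothesis is stated for weak componentwise order, so it applies to $\bar x$ even when $Z(\bar x)=Z(x^\ast)$.
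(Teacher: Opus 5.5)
Your proof is correct and follows essentially the same route as the paper: take an optimal feasible $x^\ast$, pass to a Pareto point weakly dominating it, and use monotonicity to carry over feasibility and optimality. The only difference is how that Pareto point is found: you get it in one step by maximizing $\sum_i Z_i$ over $\{x\in X: Z(x)\ge Z(x^\ast)\}$, while the paper repeatedly replaces the current point by a dominating one and cites finiteness for termination, so your version makes explicit the acyclicity of strict dominance that the paper's termination claim uses without stating it.
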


\begin{proof}
The ``$\geq$'' direction follows immediately from $\mathcal P\subseteq X$.
For the reverse inequality, let $x^\star$ be an optimal feasible solution
of \eqref{eq:original_problem}. If $x^\star\in\mathcal P$, then we trivially have the assertion. 
Otherwise, there exists $x'\in X$ that dominates
$x^\star$. By monotonicity,
\[
G_j(h(x'))\ge G_j(h(x^\star))\ge 0,
\qquad \forall j,
\]
so $x'$ is feasible. Similarly,
\[
F(h(x'))\ge F(h(x^\star)).
\]
Since $x^\star$ is globally optimal, $x'$ is also globally optimal. If
$x'$ is not Pareto optimal, the argument can be repeated. Because $X$ is
finite, this process terminates at a Pareto-optimal feasible solution with
the same objective value as the global optimum. Hence, at least one
globally optimal feasible solution is attained within $\mathcal P$.
\end{proof}

\begin{remark}
Thm.~\ref{thm:pareto_reduction} is a standard consequence of Pareto dominance theory \cite{Miettinen1999,Ehrgott2005}. 
Its relevance here is algorithmic: whenever the objective and constraints are monotone with respect to the preferred feature directions encoded by $Z$, the original constrained binary optimization problem can be solved by searching the Pareto set of the corresponding MO-QUBO.
\end{remark}

\begin{remark}
Thm.~\ref{thm:pareto_reduction} guarantees that at least one globally optimal solution of \eqref{eq:original_problem} lies in the Pareto set, but not necessarily all globally optimal solutions. Non-Pareto-optimal global optima can occur when the optimum of the original single-objective problem is non-unique. In that case, a dominating Pareto-optimal representative is preferable: by monotonicity, it cannot decrease the objective or worsen any constraint value. Hence, the Pareto-based formulation naturally selects feature-wise efficient optima, which may have larger feasibility margins and therefore be more robust with respect to the modeled constraints.
\end{remark}

\section{Portfolio Optimization with CVaR Constraint}
\label{sec:cvar_portfolio}

As a concrete example, we consider binary portfolio optimization under a CVaR constraint under the assumption that returns follow a joint Gaussian distribution. 
The same construction can also be applied to a Value-at-Risk (VaR) constraint under the same Gaussian return assumption. 

Let $R\in\mathbb{R}^n$ denote the random vector of asset returns, and assume \( R\sim\mathcal N(\mu,\Sigma), \) where $\mu \in \mathbb{R}^n$ is the vector of expected returns and $\Sigma \in \mathbb{R}^{n\times n}$ is the covariance matrix. 
Binary decision variables $x\in\{0,1\}^n$ indicate asset selection, where $x_i=1$ means that asset $i$ is included in the portfolio. 
The corresponding portfolio return and loss are
\[ R^T x \qquad\text{and}\qquad L(x)=-R^T x, \] 
respectively. Additional admissibility constraints, such as cardinality or budget constraints, can be encoded in $X\subseteq\{0,1\}^n$. 

The considered optimization problem is
\begin{align} 
\max_{x\in X} &\quad \mu^T x \\ 
\text{s.t.} &\quad \operatorname{CVaR}_{\alpha}(L(x)) \le c. 
\label{eq:cvar_portfolio}
\end{align} 
We use the upper-tail CVaR convention for losses \cite{RockafellarUryasev2000, AcerbiTasche2002, McNeilFreyEmbrechts2015}. For a loss random variable $L$ and confidence level $\alpha\in(0,1)$, let 
\[
\operatorname{VaR}_{\alpha}(L) = \inf\{\ell\in\mathbb{R}:\mathbb{P}(L\le \ell)\ge \alpha\}. 
\]
For continuous loss distributions, the corresponding CVaR is the tail expectation
\[ \operatorname{CVaR}_{\alpha}(L) = \mathbb{E}\left[L\,\middle|\,L\ge \operatorname{VaR}_{\alpha}(L)\right]. 
\] 
Since $R$ is Gaussian, the loss satisfies 
\[ 
L(x) \sim \mathcal N\!\left(-\mu^T x,\; x^T\Sigma x\right). 
\] 
Hence, for confidence level $\alpha\in(0,1)$, the CVaR is 
\begin{align}  
\operatorname{CVaR}_{\alpha}(L(x)) &= -\mu^T x + \kappa_\alpha \sqrt{x^T\Sigma x}, \label{eq:gaussian_cvar}\\ \kappa_\alpha &= \frac{\phi(\Phi^{-1}(\alpha))}{1-\alpha}, 
\end{align} 
where $\phi$ and $\Phi$ denote the standard normal probability density and cumulative distribution functions, respectively. The VaR can be expressed analogously with \(\kappa_{\alpha}=\Phi^{-1}(\alpha)\)
\cite{RockafellarUryasev2000, McNeilFreyEmbrechts2015}.

Substituting \eqref{eq:gaussian_cvar} into \eqref{eq:cvar_portfolio} yields
\begin{align}
\max_{x\in X} &\quad \mu^T x \\
\text{s.t.} &\quad
-\mu^T x + \kappa_\alpha \sqrt{x^T\Sigma x} \le c.
\label{eq:cvar_portfolio_explicit}
\end{align}
The constraint in \eqref{eq:cvar_portfolio_explicit} is a non-quadratic inequality constraint and therefore does not naturally admit a simple penalty formulation.
However, it depends only on two at most quadratic quantities:
\begin{align}
h_1(x) &= \mu^T x,\\
h_2(x) &= x^T\Sigma x.
\end{align}
The first feature measures expected return and should be maximized, while the second measures variance and should be minimized, as both changes improve the objective and increase feasibility with respect to the CVaR constraint.
Following Sec.~\ref{sec:problem_formulation}, we define
\begin{align}
Z_1(x) &= h_1(x)=\mu^T x,\\
Z_2(x) &= -h_2(x)=-x^T\Sigma x,
\end{align}
which leads to the MO-QUBO
\begin{equation}
\max_{x\in X}
\left\{
\mu^T x,\,
-x^T\Sigma x
\right\}.
\label{eq:cvar_moqubo}
\end{equation}
This is closely related to the classical mean-variance frontier of Markowitz portfolio theory \cite{Markowitz1952}.

Expressing the objective and constraint in terms of
$(z_1,z_2)=(Z_1(x),Z_2(x))$ gives
\[
F(z_1,z_2)=z_1
\]
and
\[
G(z_1,z_2)=c+z_1-\kappa_\alpha\sqrt{-z_2}.
\]
Since $\Sigma$ is positive semidefinite, $z_2\le 0$. Moreover, $G$ is
nondecreasing in both arguments: increasing $z_1$ increases expected
return, while increasing $z_2$ corresponds to decreasing variance. 
The objective $F$ is also nondecreasing with respect to the induced dominance order.

Therefore, the assumptions of Thm.~\ref{thm:pareto_reduction} are
satisfied. 
Consequently, at least one globally optimal solution of \eqref{eq:cvar_portfolio_explicit} lies on the Pareto set of the MO-QUBO \eqref{eq:cvar_moqubo}. 
If the complete Pareto set is available, solving the original problem reduces to evaluating the CVaR constraint on the Pareto solutions and selecting the feasible portfolio with maximum expected return. 
If only an approximation of the Pareto set is available, the same filtering procedure becomes heuristic. It returns the best feasible candidate among the generated solutions, but global optimality is not guaranteed in general.

\section{QAMOO for CVaR-Constrained Portfolio Optimization}
\label{sec:qamoo_cvar}

The previous section showed that the CVaR-constrained binary portfolio problem can be reduced to searching the Pareto set of the bi-objective QUBO
\[
\max_{x\in X} \left\{ \mu^T x,\, -x^T\Sigma x \right\}.
\]
This motivates the use of the Quantum Approximate Multi-Objective Optimization algorithm (QAMOO), which uses QAOA-style circuits to approximate Pareto fronts of multi-objective combinatorial optimization problems \cite{FarhiGoldstoneGutmann2014, Kotil2025QAMOO}.

In this setting, QAMOO generates Pareto-efficient portfolio candidates, while the non-quadratic CVaR constraint is evaluated classically. 
The overall procedure is:
\begin{enumerate}
    \item Construct the MO-QUBO objectives
    \[
    Z_1(x)=\mu^T x,
    \qquad
    Z_2(x)=-x^T\Sigma x,
    \]
    on the admissible portfolio set $X$.

    \item Apply QAMOO to obtain candidate portfolios approximating the
    Pareto set of \eqref{eq:cvar_moqubo}.

    \item For each generated candidate, evaluate the CVaR constraint
    \[
    -\mu^T x+\kappa_\alpha\sqrt{x^T\Sigma x}\le c.
    \]

    \item Among the feasible candidates, select the portfolio with the largest expected return $\mu^T x$.
\end{enumerate}

If QAMOO returns the complete Pareto set, the filtering step recovers a globally optimal solution of the original CVaR-constrained problem by Thm.~\ref{thm:pareto_reduction}. 
With an approximate Pareto set, the same procedure becomes a heuristic: global optimality is not guaranteed, but
the approach focuses the classical feasibility check on portfolios that are efficient in the mean-risk feature space.

\section{Numerical Results}
\label{sec:numerical_results}

We demonstrate the proposed approach on an illustrative portfolio instance with $n=100$ assets. 
We construct a sparse, hardware-compatible positive semidefinite covariance matrix $\Sigma\in\mathbb{R}^{n\times n}$ whose quadratic interactions can be implemented on a linear qubit topology using $k=4$ SWAP layers, together with an expected-return vector $\mu\in\mathbb{R}^n$. 
Further details on the construction of $\mu$ and $\Sigma$ are provided in Appendix~\ref{sec:instance_construction}. 
These quantities define the expected-return and variance objectives, $\mu^T x$ and $x^T\Sigma x$, respectively, for $x\in\{0,1\}^n$.

\begin{figure*}[ht]
    \centering
    \includegraphics[width=\linewidth]
    {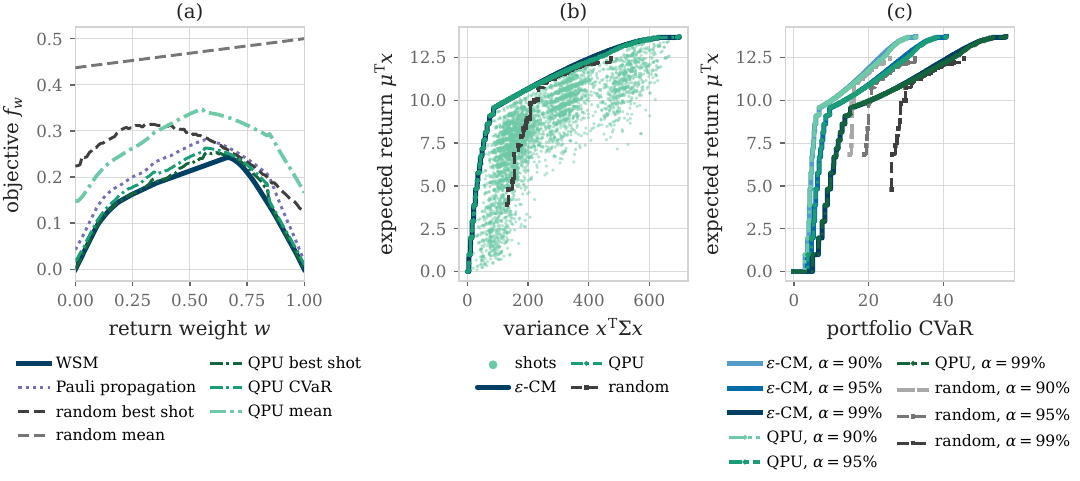}
    \caption{Results for the 100-asset portfolio instance. 
    (a) Scalarized objective values for 100 equidistant weights $w\in[0,1]$, showing the WSM reference, Pauli-propagation values, QPU mean, QPU-CVaR, QPU best shot, and random baselines. 
    The QPU-CVaR curve averages the objective over the best $0.329\%$ of the $100{,}000$ shots. 
    (b) Mean-variance values of all measured samples, together with the QPU, $\epsilon$-CM, and random non-dominated fronts. 
    (c) Mean-CVaR fronts of the $\epsilon$-CM, QPU, and random solutions for confidence levels $\alpha=90\%$, $95\%$, and $99\%$. All $\epsilon$-CM curves and fronts are derived from the same 1,000 $\epsilon$-CM solutions obtained by CPLEX; weighted-sum results are reported separately in Tab.~\ref{tab:results} and Appendix~\ref{sec:cplex_results}.}
    \label{fig:results}
\end{figure*}

Because expected return and variance may have substantially different scales, we normalize the two objectives using objective values of the anchor solutions. 
We first consider the minimization objectives
\begin{align}
    f_1(x) &= -\mu^T x, \\
    f_2(x) &= x^T\Sigma x,
\end{align}
and minimize each objective individually. 
For the instance considered here, all expected returns are positive and $\Sigma$ is positive semidefinite.
Therefore, the anchor solutions can be chosen as
$x_1^\star=\mathbf{1}$ and $x_2^\star=\mathbf{0}$, representing a maximum-return and a minimum-variance portfolio, respectively. 
We define the normalized objectives as
\begin{align}
    \widetilde f_1(x)
    &=
    \frac{f_1(x)-f_1(x_1^\star)}
         {f_1(x_2^\star)-f_1(x_1^\star)}, \\
    \widetilde f_2(x)
    &=
    \frac{f_2(x)-f_2(x_2^\star)}
         {f_2(x_1^\star)-f_2(x_2^\star)}.
\end{align}
Each normalized objective is zero at its own anchor and one at the opposite anchor.

We evaluate the quality of the approximated Pareto fronts using the hypervolume (HV) indicator \cite{riquelme_2015_moo_metrics}. In the present bi-objective setting, the HV is the area of the objective space that is dominated by the non-dominated solutions and bounded by a fixed reference point. Before computing the HV, we express both objectives in minimization form and use the anchor normalization introduced above. We choose the reference point $(1.05,1.05)$, which is slightly worse than the two opposite anchor values $(1,1)$. All solution sets are evaluated using the same normalization and reference point. A larger HV indicates that the approximation is closer to the Pareto front, covers a broader range of tradeoffs, or both.

To obtain a classical reference, we apply the $\epsilon$-constraint method ($\epsilon$-CM)~\cite{Ehrgott2005} using IBM ILOG CPLEX~\cite{cplex}. 
We solve
\begin{align}
    \min_{x\in\{0,1\}^n} &\quad \widetilde f_2(x) \\
    \text{s.t.} &\quad \widetilde f_1(x)\leq\epsilon
    \label{eq:eps-cm}
\end{align}
for $1{,}000$ equidistant values of $\epsilon\in[0,1]$. 
The $\epsilon$-CM can recover both supported and unsupported Pareto-optimal solutions, where supported solutions lie on the lower convex envelope of the attainable objective set and unsupported solutions correspond to non-convex regions of the Pareto front \cite{Ehrgott2005}.
The results are shown in Fig.~\ref{fig:results} and Tab.~\ref{tab:results}.
CPLEX solves most of the resulting problems to proven optimality within the given limit of three hours per instance. 
For a small number of $\epsilon$ values, it returns a feasible incumbent without proving optimality within this limit.
Further details are provided in Appendix~\ref{sec:cplex_results}.

\begin{table}
    \centering
    \resizebox{\columnwidth}{!}{%
        \begin{tabular}{lcccc} 
            Mean-Variance & $\epsilon$-CM & QAMOO & WSM & Random \\ 
            \hline 
            $\#$Points & 288 & 539 & 28 & 56 \\ 
            HV & 0.9295 & 0.9256 & 0.9156 & 0.7563 \\ 
            Rel.~HV & 1.0000 & 0.9958 & 0.9851 & 0.8137 \\ 
            &&&&\\ 
            Mean-CVaR (90\%) & $\epsilon$-CM & QAMOO & WSM & Random \\ 
            \hline 
            $\#$Points & 223 & 400 & 28 & 35 \\ 
            HV & 0.8313 & 0.8274 & 0.8171 & 0.5401 \\ 
            Rel.~HV & 1.0000 & 0.9954 & 0.9829 & 0.6497 \\
            &&&&\\ 
            Mean-CVaR (95\%) & $\epsilon$-CM & QAMOO & WSM & Random \\ 
            \hline 
            $\#$Points & 231 & 412 & 28 & 37 \\ 
            HV & 0.8166 & 0.8130 & 0.8009 & 0.5339 \\ 
            Rel.~HV & 1.0000 & 0.9956 & 0.9808 & 0.6538 \\ 
            &&&&\\ 
            Mean-CVaR (99\%) & $\epsilon$-CM & QAMOO & WSM & Random \\ 
            \hline 
            $\#$Points & 243 & 432 & 28 & 44 \\ 
            HV & 0.8002 & 0.7968 & 0.7828 & 0.5276 \\ 
            Rel.~HV & 1.0000 & 0.9958 & 0.9782 & 0.6593
        \end{tabular}
    }
    \caption{Pareto-front quality for the mean-variance and mean-CVaR formulations. Results are shown for $\epsilon$-CM with 1,000 discretization points, QAMOO at depth $p=3$, WSM using the same 100 scalarization weights, and uniform random sampling with $10^7$ samples, matching the total QAMOO shot count. \emph{$\#$Points} denotes the number of non-dominated portfolios found by each approach. Since the mean-CVaR front is a subset of the mean-variance front, the counts differ between blocks. \emph{HV} denotes the normalized hypervolume computed using anchor-based normalization and reference point $(1.05,1.05)$; the ranges are $[0,13.7242]$ for return, $[0,698.6603]$ for variance, and $[0,32.6638]$, $[0,40.7978]$, and $[0,56.7232]$ for CVaR at $\alpha=90\%$, $95\%$, and $99\%$, respectively. \emph{Rel.~HV} is normalized by the corresponding $\epsilon$-CM value and therefore equals one for $\epsilon$-CM by construction. The Random column reports a representative run selected as the lower median of ten independent runs ranked by mean-variance HV; across these runs, the mean-variance HV ranged from $0.750917$ to $0.760266$.}
    \label{tab:results}
\end{table}

We further consider the discretized weighted-sum method (WSM) with 100 equidistant scalarization weights $w\in[0,1]$. 
For each weight, we minimize
\begin{equation}
    f_w(x)
    =
    w\,\widetilde f_1(x)
    +(1-w)\,\widetilde f_2(x).
    \label{eq:wsm}
\end{equation}
We solve the resulting QUBO for each value of $w$ using CPLEX~\cite{cplex}, which proves optimality for all scalarizations. 
Further details are provided in Appendix~\ref{sec:cplex_results}. 
In contrast to the $\epsilon$-CM, the WSM recovers only supported Pareto-optimal solutions.
Again, the results are shown in Fig.~\ref{fig:results} and Tab.~\ref{tab:results}.

Finally, we run QAMOO~\cite{Kotil2025QAMOO} by using QAOA as a heuristic solver within the WSM for the same 100 scalarization weights as before. 
For each scalarization, we apply QAOA~\cite{FarhiGoldstoneGutmann2014} with the linear-ramp (LR) parameterization \cite{montanez2025lrrqaoa} at depth $p=3$. 
The corresponding energy is evaluated using Pauli propagation~\cite{Rudolph2026PauliPropagation,Angrisani2025NoiselessPauliPropagation} and the two LR slopes are optimized using a grid search followed by a local refinement: eight well-separated low-energy grid cells seed eight shrinking-step local searches, and the lowest of the eight is kept. Further details on the LR-QAOA schedule and parameter selection are provided in Appendix~\ref{sec:lr_qaoa}.

Each circuit is executed with $100{,}000$ shots and a repetition delay of $100\,\mu\mathrm{s}$ on \emph{ibm\_boston} through the IBM Quantum
Platform~\cite{ibm_quantum_platform}. 
With 100 scalarization weights, this corresponds to
\(
    100\times100{,}000=10{,}000{,}000
\)
shots. 
After transpilation, each circuit acts on 100 qubits and contains 2,376 CZ gates at a two-qubit-gate depth of 48. 
As a heuristic estimate, we multiply the reported CZ-gate fidelities, single-qubit-gate fidelities, and measurement fidelities associated with the transpiled circuit \cite{Barron_2024}. Under an independent stochastic-error interpretation, the resulting product approximates the probability that an execution of the circuit suffers no error. For $p=3$ this estimate is $0.329\%$, corresponding to approximately 329 of the $100{,}000$ shots per weight. The aggregated QPU execution time over all circuits was 1,149~s,
excluding queueing time.

Fig.~\ref{fig:results}(a) shows that the measured expectation values follow the trend of the CPLEX and Pauli-propagation references, although hardware noise increases the expected scalarized objective. 
We additionally compute the lower-tail CVaR of the sampled scalarized objective distribution, setting its tail fraction to the estimated circuit fidelity of $0.329\%$. This shot tail fraction is unrelated to the portfolio-CVaR confidence levels $\alpha=90\%,95\%,99\%$ of Fig.~\ref{fig:results}(c), which refer to the portfolio's loss distribution and not to shots. As shown by Barron et al.~\cite{Barron_2024}, under the corresponding noise-model assumptions, this CVaR estimated using noisy samples provides a provable lower bound on the noise-free expectation value. The observed CVaR values are consistent with this result. Moreover, the best sampled solutions closely track the CPLEX reference solutions across the scalarization weights.

For every sampled portfolio $x$, we evaluate its expected return $\mu^T x$ and variance $x^T\Sigma x$. 
The non-dominated sampled portfolios form an approximation of the mean-variance Pareto front. 
Fig.~\ref{fig:results}(b) compares this front with the $\epsilon$-CM reference front and shows that the quantum samples closely recover its overall structure. 
As reported in Tab.~\ref{tab:results}, QAMOO achieves $99.58\%$ of the mean-variance HV of the $\epsilon$-CM reference, compared with $98.51\%$ for CPLEX-based WSM. 
Although QAMOO uses the same weighted scalarizations as WSM, each circuit generates many portfolios rather than a single optimum, and can recover non-dominated portfolios beyond the supported solutions returned by WSM.

Finally, for each generated portfolio, we evaluate 
\begin{equation} 
\operatorname{CVaR}_{\alpha}(L(x)) = -\mu^T x +\kappa_\alpha\sqrt{x^T\Sigma x} 
\end{equation} 
for confidence levels $\alpha = 90\%, 95\%, 99\%$, corresponding to upper-tail probabilities of $10\%$, $5\%$, and $1\%$, respectively.
The corresponding values are $\kappa_{0.90}=1.7550$, $\kappa_{0.95}=2.0627$, and $\kappa_{0.99}=2.6652$.
Since $\alpha$ enters only through $\kappa_\alpha$, all three CVaR values are computed classically from the same portfolios. 
For each confidence level, we transform the portfolios from the mean-variance representation to the mean-CVaR representation.
To compute the HV of each mean-CVaR front, we again express both objectives in minimization form, apply anchor-based normalization, and use $(1.05, 1.05)$ as reference point.
The resulting fronts are shown in Fig.~\ref{fig:results}(c), with the corresponding HV values reported in Tab.~\ref{tab:results}. 
QAMOO reaches between $99.54\%$ and $99.58\%$ of the corresponding $\epsilon$-CM reference HVs, while WSM reaches between $97.82\%$ and $98.29\%$.
Uniform random sampling yields substantially lower HV than the optimization-based approaches. 

To assess robustness, we performed ten additional repetitions of the $p=3$, $k=4$ hardware experiment in a later session. The median curves and variability across repetitions are reported in Appendix~\ref{sec:run_stability}. The corresponding median results are nearly indistinguishable from those of the main hardware run reported in Fig.~\ref{fig:results} and Tab.~\ref{tab:results}.


\begin{figure}[ht]
\centering
\includegraphics[width=\columnwidth]{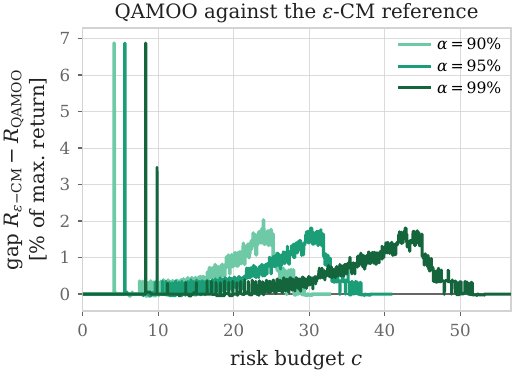}
\caption{Expected-return gap between the QAMOO and $\epsilon$-CM mean-CVaR fronts
as a function of the CVaR budget $c$, for $p=3$ and $k=4$ on \emph{ibm\_boston}.
For each budget $c$, the return $R(c)$ is the maximum expected return among
portfolios satisfying $\operatorname{CVaR}_{\alpha}(L(x))\le c$, evaluated in the
Gaussian form $\operatorname{CVaR}_{\alpha}(L(x))=-\mu^{\mathrm{T}}x
+\kappa_{\alpha}\sqrt{x^{\mathrm{T}}\Sigma x}$ with
$\kappa_{\alpha}=\varphi(\Phi^{-1}(1-\alpha))/\alpha$. The plotted quantity is
the signed difference \((R_{\epsilon\text{-CM}}(c)-R_{\mathrm{QAMOO}}(c))/
\mu_{\max},\) where $\mu_{\max}=13.7242$ is the maximum attainable return,
reached by holding all 100 assets; positive values mean that the $\epsilon$-CM
reference attains the higher return at that budget. Results are shown for
$\alpha=90\%$, $95\%$, and $99\%$, over the budget ranges $c\le32.66$, $40.80$,
and $56.72$, respectively. Averaged over the respective budget ranges, the gap is
$0.384\%$, $0.362\%$, and $0.338\%$. The narrow spikes correspond to a single
missing step on the approximated front, reaching a maximum gap of $6.88\%$ over
an interval of $c$ that occupies less than $0.1\%$ of the budget range. For
$15.9\%$, $15.5\%$, and $15.3\%$ of the budget range the sign reverses and
$R_{\mathrm{QAMOO}}(c)>R_{\epsilon\text{-CM}}(c)$, because the $\epsilon$-CM
reference is itself based on a finite discretization of 1,000 $\epsilon$ values
and does not represent the complete Pareto front; in that region the QAMOO
advantage never exceeds $0.049\%$ of $\mu_{\max}$, which is why the curve departs
only slightly below zero.}
\label{fig:budget_error}
\end{figure}

Given a CVaR upper bound $c$, the approximate solution of the original
constrained problem is the portfolio on the (approximated) mean-CVaR front with the largest expected return among all portfolios satisfying
\(
    \operatorname{CVaR}_{\alpha}(L(x))\leq c.
\)
Let $R(c)$ denote the corresponding return. Since only finitely many portfolios appear on the (approximated) mean-CVaR front, $R(c)$ is a step function that changes only when $c$ reaches the CVaR value of another front point. Different values of $c$ therefore select portfolios between the minimum-variance and maximum-return anchors. Figure~\ref{fig:budget_error} shows the normalized return gap between the $\epsilon$-CM and QAMOO solutions
\[ 
\frac{R_{\epsilon\text{-CM}}(c)-R_{\mathrm{QAMOO}}(c)}{\mu_{\max}}, 
\] 
where $\mu_{\max}=13.7242$ is the maximum attainable return. Averaged over the full range of $c$, the gap is $0.390\%$, $0.368\%$, and $0.344\%$ for $\alpha=90\%$, $95\%$, and $99\%$, respectively. The maximum gap of $6.88\%$ occurs at a single missing front step spanning only $0.09\%$ of the budget range. 

Overall, QAMOO closely approximates the mean-variance and mean-CVaR reference fronts for this instance and achieves a higher HV than the exact discretized WSM using the same scalarization weights. From the perspective of the original CVaR-constrained optimization problem, it yields portfolios whose expected returns are very close to the $\epsilon$-CM reference across the full range of CVaR budgets. QAMOO also generates a larger set of non-dominated portfolio candidates, which may provide additional flexibility when selection criteria beyond those included in the optimization model are considered, although these candidates are not necessarily non-dominated with respect to the combined reference set.

\section{Conclusion \& Outlook}

We introduced a framework for addressing binary optimization problems with complex non-quadratic objectives or constraints through a multi-objective QUBO reformulation. 
When the objective and constraints depend on a small number of quadratic features and are monotone with respect to their preferred directions, at least one globally optimal solution lies in the Pareto set of the associated MO-QUBO. 
Complex objectives and constraints can therefore be evaluated through classical post-processing rather than encoded directly as quadratic penalties.

We demonstrated the approach for binary portfolio optimization under a CVaR constraint using an IBM quantum computer. Applying QAMOO to an illustrative 100-asset instance, we approximated the mean-variance Pareto front and derived mean-CVaR fronts for different confidence levels. 
The quantum hardware results recovered the overall structure of the classical front and provided competitive feasible portfolio candidates across different risk bounds.

While the present framework focuses on quadratic features, it can be extended to higher-order feature functions. Future work should also focus on developing more efficient strategies for selecting QAOA parameters across scalarizations, rather than optimizing them independently for each weight.
Although demonstrated on a relatively simple single-objective problem, the framework is general and can help to enable quantum approaches to more complex and realistic constrained optimization problems.

\subsection*{Acknowledgments} 
We thank Manuel Proissl for his support in constructing the illustrative problem instance. 

Anthropic Claude Code and GPT-5.6 (via Microsoft Copilot) were used to assist with code development and editorial improvements during the preparation of this manuscript. All scientific ideas, analyses, and conclusions are those of the authors. All AI-assisted code and text were carefully reviewed, manually verified, and validated prior to inclusion.

\appendix

\section{Instance Construction}
\label{sec:instance_construction}

We generated the 100-asset illustrative instance used in Sec.~\ref{sec:numerical_results} by applying a hardware-compatible sparsification to a dense portfolio dataset. The dataset was derived from a simulated portfolio generated using a geometric Brownian motion model with a stylized correlation structure. Covariances were estimated from 252 simulated trading days and regularized using Ledoit-Wolf shrinkage \cite{ledoit_2004_covariance}. We denote the resulting expected-return vector and dense covariance matrix by $\mu\in\mathbb{R}^{100}$ and $\Sigma_{\mathrm{dense}}\in\mathbb{R}^{100\times100}$, respectively. The dense covariance matrix is symmetric and positive semidefinite, with eigenvalues in $[0.0207,35.6118]$. 
All expected returns are positive and lie in $[0.0455,0.9847]$, with sum $13.7242$.

The interaction pattern is generated by $k=4$ layers of non-overlapping SWAP gates, alternating between layers starting on even- and odd-indexed qubits, applied to a line of 100 qubits \cite{Weidenfeller_2022}. 
This SWAP strategy determines which off-diagonal entries of $\Sigma_{\mathrm{dense}}$ are retained, which we capture in the interaction mask $M \in \{0, 1\}^{100 \times 100}$, where $1$ indicates that the corresponding quadratic interaction is possible and $0$ otherwise.
The corresponding circuit contains 198 SWAP gates, realizes 297 quadratic interactions (out of the 4,950 interactions in the original dense matrix), and is transpiled into 792 CZ gates per QAOA layer, i.e. the 2,376 CZ gates of the $p=3$ circuit quoted in Sec.~\ref{sec:numerical_results}.

We assign the assets, i.e., binary variables, to virtual qubits on the 100-qubit line using a permutation $\pi$ that
maximizes the retained squared covariance mass,
\begin{equation}
  \operatorname{score}(\pi)
  =
  \sum_{i \neq j}
  \left(\Sigma_{\mathrm{dense}}\right)_{\pi(i)\,\pi(j)}^2
  M_{ij}.
  \label{eq:qapscore}
\end{equation}
We solve this quadratic assignment problem using SciPy's \emph{Fast Approximate QAP} algorithm followed by \emph{2-opt refinement} \cite{virtanen2020scipy}. This retains $28.0386\%$ of the total squared off-diagonal covariance mass and $9.9361\%$ of the total absolute off-diagonal covariance mass.

Removing covariance terms can make the sparse matrix indefinite. We restore
positive semidefiniteness using
\begin{equation}
  \begin{aligned}
  \delta&=
  \max\!\left(
  0,-\lambda_{\min}(\Sigma_{\mathrm{sparse}})+10^{-8}
  \right),\\
  \Sigma&=\Sigma_{\mathrm{sparse}}+\delta I.
  \end{aligned}
  \label{eq:psdrepair}
\end{equation}
For the $k=4$ instance, $\delta=4.8073$, while the mean diagonal of
$\Sigma_{\mathrm{sparse}}$ is $0.4229$. 
This repair preserves all off-diagonal covariance terms while restoring positive semidefiniteness.
The resulting $\mu$ and $\Sigma$ are
those used in the MO-QUBO~\eqref{eq:cvar_moqubo} and in all numerical experiments reported in
Sec.~\ref{sec:numerical_results}.

\section{CPLEX Results}
\label{sec:cplex_results}

We used IBM~ILOG~CPLEX~22.2.0.0 with one thread, a relative MIP-gap target of
$10^{-4}$, and a three-hour time limit per model. Of the $1{,}000$
$\epsilon$-CM models in Eq.~\eqref{eq:eps-cm}, 994 were proved optimal and six
reached the time limit. All six returned feasible incumbents but were not
certified optimal. The median solve time was $0.0110$~s. The total solve time
was $24.7$~h, or $6.7$~h after excluding the six time-limited models. CPLEX
proved optimality for all 100 WSM scalarizations in Eq.~\eqref{eq:wsm}. The total solve time for these 100 models was 0.3854~s.

\section{LR-QAOA Parameter Training}
\label{sec:lr_qaoa}

For the QAOA implementation of Eq.~\eqref{eq:wsm}, we use the two-parameter
linear-ramp schedule of Ref.~\cite{montanez2025lrrqaoa},
\begin{equation}
  \begin{aligned}
  \beta_l&=-\left(1-\frac{l}{p}\right)\Delta\beta,\\
  \gamma_l&=\frac{l+1}{p}
  \frac{\Delta\gamma}{\kappa_{\mathrm{LR}}},
  \qquad l=0,\ldots,p-1,
  \end{aligned}
\end{equation}
with $p=3$, as in Sec.~\ref{sec:numerical_results}. The scaling factor is
\begin{equation}
  \kappa_{\mathrm{LR}}
  =
  \max\!\left(
  \max_i|a_i|,
  \max_{i<j}|J_{ij}|
  \right),
\end{equation}
where $a_i$ and $J_{ij}$ are the local fields and couplings of the Ising Hamiltonian corresponding to $f_w$.

For each of the 100 weights in Eq.~\eqref{eq:wsm}, we evaluate 121
geometrically spaced values of $\Delta\gamma$ in $[0.25,40]$ and 25 uniformly spaced values of $\Delta\beta$ in $[0,3\pi]$. We refine the grid solution with eight local searches. Their starting cells are selected in increasing energy order and must be separated from earlier starts by at least three grid steps on either axis. This corresponds to a factor of $1.135$ in $\Delta\gamma$ or $1.178$ rad in $\Delta\beta$ and helps sample different basins. From each start, the search evaluates the eight neighbors obtained by multiplying or dividing $\Delta\gamma$ by $1.0432$ and adding or subtracting $0.3927$ rad from $\Delta\beta$, including diagonal moves. It repeatedly moves to the lowest-energy neighbor until no improvement is found. Both steps are then halved: the $\Delta\gamma$ factor becomes $\sqrt{1.0432}$, while the $\Delta\beta$ step becomes $0.19635$ rad. After six halvings, the best of the eight searches is retained. Coordinates remain within the scanned window, and all energies use the same Pauli-propagation evaluation as the grid. For $p=3$ and $k=4$, refinement improved the energy for all 100 weights, with a median improvement of $2.3\times10^{-3}$. Relative to the improvement obtained by the grid search alone, this corresponds to a median of $0.9\%$ and a maximum of $9.0\%$. The energies are evaluated using Pauli propagation with maximum Pauli weight 10. Before propagation, the observable is normalized by its $L_1$-norm, and the absolute coefficient cutoff is set to $10^{-5}$ \cite{Rudolph2026PauliPropagation,Angrisani2025NoiselessPauliPropagation}. For $p=3$ and $k=4$, training the two LR slopes for all 100 weights required $2{,}692$ CPU-hours. The trained slopes and associated quantities across the 100 weights are shown in Fig.~\ref{fig:slopes}.

\begin{figure*}[ht]
    \centering
    \includegraphics[width=\linewidth]
    {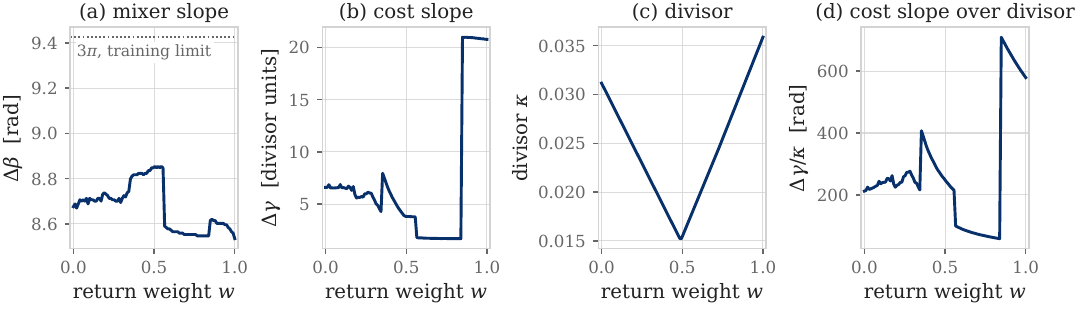}
    \caption{The trained ramp quantities at $p=3$, $k=4$, against the return weight $w$, over the 100 weights submitted to hardware. (a) the mixer slope $\Delta\beta$; the dotted line marks $3\pi$, the upper limit of the training range. (b) the cost slope $\Delta\gamma$, in units of the divisor. (c) the divisor $\kappa_{\mathrm{LR}}$, the largest absolute Ising coefficient of the problem at that weight; $\kappa_{\mathrm{LR}}$ depends on the problem and the weight only, not on the circuit depth. (d) the ratio $\Delta\gamma/\kappa_{\mathrm{LR}}$, in radians. Over the 100 weights, $\Delta\beta$ runs from 8.535 to 8.854 rad, a factor 1.04, with median 8.698 rad; $\Delta\gamma$ runs from 1.699 to 20.987 divisor units, a factor 12.35; $\kappa_{\mathrm{LR}}$ runs from 0.01523 to 0.03588, a factor 2.36, smallest at $w=0.495$ and largest at $w=1.000$; and $\Delta\gamma/\kappa_{\mathrm{LR}}$ runs from 58.3 to 709.8 rad, a factor 12.18. Markers are drawn every 13 weights.}
    \label{fig:slopes}
\end{figure*}

\section{Robustness of Results} 
\label{sec:run_stability}
To assess robustness, we repeated the $p=3$, $k=4$ hardware experiment ten times. 
Figure~\ref{fig:run_stability} shows the median curves and the variability across repetitions. 
These runs were executed in a later hardware session on \emph{ibm\_boston}, resulting in an estimated circuit fidelity of $0.162\%$ rather than the $0.329\%$ reported in Sec.~\ref{sec:numerical_results}. 
Despite this difference, the results exhibit the same qualitative behavior and remain very close to those reported in the main text. 

\begin{figure*}[ht]
    \centering 
    \includegraphics[width=\linewidth]{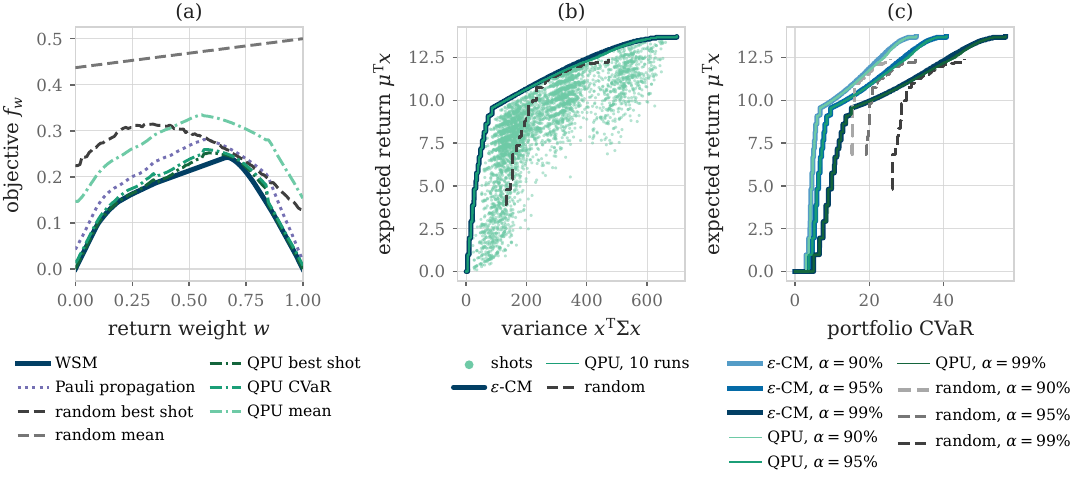} 
    \caption{Robustness analysis at $p=3$ and $k=4$ on \texttt{ibm\_boston}. The 100 circuits used in Sec.~\ref{sec:numerical_results} were executed ten times, with $100{,}000$ shots per weight in each repetition. QPU curves show the median across repetitions; the $\epsilon$-CM reference and random baseline are identical for all runs. The number of non-dominated portfolios ranged from 538 to 621 for the mean-variance front, and from 415 to 455, 429 to 478, and 443 to 494 for the mean-CVaR fronts at $\alpha=90\%$, $95\%$, and $99\%$, respectively. The QPU-CVaR tail fraction was set to the estimated circuit fidelity of the corresponding hardware session, $0.162\%$, compared with $0.329\%$ for the run reported in the main text. Overall, the repeated runs closely reproduce the results of Sec.~\ref{sec:numerical_results}.} 
    \label{fig:run_stability}
\end{figure*}

\bibliographystyle{arxiv_no_month}
\bibliography{references}

@book{Miettinen1999,
  author    = {Miettinen, Kaisa},
  title     = {Nonlinear Multiobjective Optimization},
  publisher = {Kluwer Academic Publishers},
  address   = {Boston},
  year      = {1999},
  url       = {https://users.jyu.fi/~miettine/book/}
}

@book{Ehrgott2005,
  author    = {Ehrgott, Matthias},
  title     = {Multicriteria Optimization},
  edition   = {2},
  publisher = {Springer},
  year      = {2005},
  doi       = {10.1007/3-540-27659-9},
  url       = {https://link.springer.com/book/10.1007/3-540-27659-9}
}

@article{Markowitz1952,
  author  = {Markowitz, Harry},
  title   = {Portfolio Selection},
  journal = {The Journal of Finance},
  volume  = {7},
  number  = {1},
  pages   = {77--91},
  year    = {1952},
  doi     = {10.1111/j.1540-6261.1952.tb01525.x},
  url     = {https://afajof.org/issue/volume-7-issue-1/}
}

@article{RockafellarUryasev2000,
  author  = {Rockafellar, R. Tyrrell and Uryasev, Stanislav},
  title   = {Optimization of Conditional Value-at-Risk},
  journal = {Journal of Risk},
  year    = {2000},
  volume  = {2},
  number  = {3},
  pages   = {21--41},
  doi     = {10.21314/JOR.2000.038},
  url     = {https://doi.org/10.21314/JOR.2000.038}
}

@article{Kotil2025QAMOO,
  author  = {Kotil, Ayse and Pelofske, Elijah and Riedmuller, Stephanie and Egger, Daniel J. and Eidenbenz, Stephan and Koch, Thorsten and Woerner, Stefan},
  title   = {Quantum Approximate Multi-Objective Optimization},
  journal = {Nature Computational Science},
  year    = {2025},
  doi     = {10.1038/s43588-025-00873-y},
  url     = {https://www.nature.com/articles/s43588-025-00873-y},
  eprint  = {2503.22797},
  archivePrefix = {arXiv},
  primaryClass  = {quant-ph}
}

@article{Lucas2014,
  author  = {Lucas, Andrew},
  title   = {Ising Formulations of Many {NP} Problems},
  journal = {Frontiers in Physics},
  year    = {2014},
  volume  = {2},
  pages   = {5},
  doi     = {10.3389/fphy.2014.00005},
  url     = {https://doi.org/10.3389/fphy.2014.00005}
}

@article{GloverKochenbergerDu2019,
  author  = {Glover, Fred and Kochenberger, Gary and Du, Yu},
  title   = {Quantum Bridge Analytics I: A Tutorial on Formulating and Using QUBO Models},
  journal = {4OR},
  year    = {2019},
  volume  = {17},
  pages   = {335--371},
  doi     = {10.1007/s10288-019-00424-y},
  url     = {https://doi.org/10.1007/s10288-019-00424-y}
}

@article{AcerbiTasche2002,
  author  = {Acerbi, Carlo and Tasche, Dirk},
  title   = {On the Coherence of Expected Shortfall},
  journal = {Journal of Banking \& Finance},
  year    = {2002},
  volume  = {26},
  number  = {7},
  pages   = {1487--1503},
  doi     = {10.1016/S0378-4266(02)00283-2},
  url     = {https://doi.org/10.1016/S0378-4266(02)00283-2}
}

@book{McNeilFreyEmbrechts2015,
  author    = {McNeil, Alexander J. and Frey, R{\"u}diger and Embrechts, Paul},
  title     = {Quantitative Risk Management: Concepts, Techniques and Tools},
  edition   = {Revised},
  publisher = {Princeton University Press},
  address   = {Princeton},
  year      = {2015},
  isbn      = {9780691166278},
  url       = {https://press.princeton.edu/books/hardcover/9780691166278/quantitative-risk-management}
}

@misc{FarhiGoldstoneGutmann2014,
  author        = {Farhi, Edward and Goldstone, Jeffrey and Gutmann, Sam},
  title         = {A Quantum Approximate Optimization Algorithm},
  year          = {2014},
  eprint        = {1411.4028},
  archivePrefix = {arXiv},
  primaryClass  = {quant-ph},
  doi           = {10.48550/arXiv.1411.4028},
  url           = {https://arxiv.org/abs/1411.4028}
}

@article{Barron_2024,
   title={Provable bounds for noise-free expectation values computed from noisy samples},
   volume={4},
   ISSN={2662-8457},
   url={http://dx.doi.org/10.1038/s43588-024-00709-1},
   DOI={10.1038/s43588-024-00709-1},
   number={11},
   journal={Nature Computational Science},
   publisher={Springer Science and Business Media LLC},
   author={Barron, Samantha V. and Egger, Daniel J. and Pelofske, Elijah and Bärtschi, Andreas and Eidenbenz, Stephan and Lehmkuehler, Matthis and Woerner, Stefan},
   year={2024},
   month=Nov, pages={865–875} 
}

@misc{ibm_quantum_platform,
  author       = {{IBM Quantum}},
  title        = {IBM Quantum Platform},
  year         = {2025},
  howpublished = {\url{https://quantum.cloud.ibm.com/}},
  note         = {Accessed: 2026-08-28}
}

@manual{cplex,
  author       = {{IBM}},
  title        = {IBM ILOG CPLEX Optimization Studio},
  organization = {IBM},
  year         = {2025},
  url          = {https://www.ibm.com/products/ilog-cplex-optimization-studio}
}

@article{montanez2025lrrqaoa,
  author  = {J. A. Monta{\~n}ez-Barrera and Kristel Michielsen},
  title   = {Toward a Linear-Ramp QAOA Protocol: Evidence of a Scaling Advantage in Solving Some Combinatorial Optimization Problems},
  journal = {npj Quantum Information},
  volume  = {11},
  pages   = {131},
  year    = {2025},
  doi     = {10.1038/s41534-025-01082-1}
}

@article{Angrisani2025NoiselessPauliPropagation,
  author    = {Angrisani, Armando and Schmidhuber, Alexander
               and Rudolph, Manuel S. and Cerezo, M.
               and Holmes, Zo{\"e} and Huang, Hsin-Yuan},
  title     = {Classically Estimating Observables of Noiseless Quantum Circuits},
  journal   = {Physical Review Letters},
  volume    = {135},
  number    = {17},
  pages     = {170602},
  year      = {2025},
  month     = {oct},
  publisher = {American Physical Society},
  doi       = {10.1103/lh6x-7rc3},
  url       = {https://doi.org/10.1103/lh6x-7rc3}
}

@article{Rudolph2026PauliPropagation,
  author    = {Rudolph, Manuel S. and Jones, Tyson and Teng, Yanting
               and Angrisani, Armando and Holmes, Zo{\"e}},
  title     = {Pauli Propagation: A Computational Framework for Simulating
               Quantum Systems},
  journal   = {PRX Quantum},
  volume    = {7},
  pages     = {032001},
  year      = {2026},
  month     = {aug},
  publisher = {American Physical Society},
  doi       = {10.1103/6vd7-l9bn},
  url       = {https://doi.org/10.1103/6vd7-l9bn}
}

@INPROCEEDINGS{riquelme_2015_moo_metrics,
  author={Riquelme, Nery and Von Lücken, Christian and Baran, Benjamin},
  booktitle={2015 Latin American Computing Conference (CLEI)}, 
  title={Performance metrics in multi-objective optimization}, 
  year={2015},
  volume={},
  number={},
  pages={1-11},
  doi={10.1109/CLEI.2015.7360024}
}

@misc{vazquez2026efficientfourierbasedlinearcombination,
      title={Efficient Fourier-Based Linear Combination of Unitaries and Applications in Quantum Optimization}, 
      author={Almudena Carrera Vazquez and Daniel J. Egger and Stefan Woerner},
      year={2026},
      eprint={2605.18985},
      archivePrefix={arXiv},
      primaryClass={quant-ph},
      url={https://arxiv.org/abs/2605.18985}, 
}

@article{Weidenfeller_2022,
   title={Scaling of the quantum approximate optimization algorithm on superconducting qubit based hardware},
   volume={6},
   ISSN={2521-327X},
   url={http://dx.doi.org/10.22331/q-2022-12-07-870},
   DOI={10.22331/q-2022-12-07-870},
   journal={Quantum},
   publisher={Verein zur Forderung des Open Access Publizierens in den Quantenwissenschaften},
   author={Weidenfeller, Johannes and Valor, Lucia C. and Gacon, Julien and Tornow, Caroline and Bello, Luciano and Woerner, Stefan and Egger, Daniel J.},
   year={2022},
   month=Dec, pages={870} 
}

@article{virtanen2020scipy,
  title   = {SciPy 1.0: Fundamental Algorithms for Scientific Computing in Python},
  author  = {Virtanen, Pauli and Gommers, Ralf and Oliphant, Travis E. and
             Haberland, Matt and Reddy, Tyler and Cournapeau, David and
             Burovski, Evgeni and Peterson, Pearu and Weckesser, Warren and
             Bright, Jonathan and van der Walt, St{\'e}fan J. and
             Brett, Matthew and Wilson, Joshua and Millman, K. Jarrod and
             Mayorov, Nikolay and Nelson, Andrew R. J. and Jones, Eric and
             Kern, Robert and Larson, Eric and Carey, C. J. and
             Polat, {\.I}lhan and Feng, Yu and Moore, Eric W. and
             VanderPlas, Jake and Laxalde, Denis and Perktold, Josef and
             Cimrman, Robert and Henriksen, Ian and Quintero, E. A. and
             Harris, Charles R. and Archibald, Anne M. and
             Ribeiro, Ant{\^o}nio H. and Pedregosa, Fabian and
             van Mulbregt, Paul},
  journal = {Nature Methods},
  volume  = {17},
  pages   = {261--272},
  year    = {2020},
  doi     = {10.1038/s41592-019-0686-2}
}

@article{ledoit_2004_covariance,
title = {A well-conditioned estimator for large-dimensional covariance matrices},
journal = {Journal of Multivariate Analysis},
volume = {88},
number = {2},
pages = {365-411},
year = {2004},
issn = {0047-259X},
doi = {https://doi.org/10.1016/S0047-259X(03)00096-4},
url = {https://www.sciencedirect.com/science/article/pii/S0047259X03000964},
author = {Olivier Ledoit and Michael Wolf}
}

@article{Abbas2024QuantumOptimization,
  author = {Abbas, Amira and Ambainis, Andris and Augustino, Brandon
            and B{\"a}rtschi, Andreas and Buhrman, Harry and Coffrin, Carleton
            and Cortiana, Giorgio and Dunjko, Vedran and Egger, Daniel J.
            and Elmegreen, Bruce G. and Franco, Nicola and Fratini, Filippo
            and Fuller, Bryce and Gacon, Julien and Gonciulea, Constantin
            and Gribling, Sander and Gupta, Swati and Hadfield, Stuart
            and Heese, Raoul and Kircher, Gerhard and Kleinert, Thomas
            and Koch, Thorsten and Korpas, Georgios and Lenk, Steve
            and Marecek, Jakub and Markov, Vanio and Mazzola, Guglielmo
            and Mensa, Stefano and Mohseni, Naeimeh and Nannicini, Giacomo
            and O'Meara, Corey and Pe{\~n}a Tapia, Elena and Pokutta, Sebastian
            and Proissl, Manuel and Rebentrost, Patrick and Sahin, Emre
            and Symons, Benjamin C. B. and Tornow, Sabine and Valls, V{\'i}ctor
            and Woerner, Stefan and Wolf-Bauwens, Mira L. and Yard, Jon
            and Yarkoni, Sheir and Zechiel, Dirk and Zhuk, Sergiy
            and Zoufal, Christa},
  title = {Challenges and Opportunities in Quantum Optimization},
  journal = {Nature Reviews Physics},
  year = {2024},
  volume = {6},
  pages = {718--735},
  doi = {10.1038/s42254-024-00770-9}
}

@article{Koch2026QOBLIB,
  author = {Koch, Thorsten and Bernal Neira, David E. and Chen, Ying
            and Cortiana, Giorgio and Egger, Daniel J. and Heese, Raoul
            and Hegade, Narendra N. and Gomez Cadavid, Alejandro
            and Huang, Rhea and Itoko, Toshinari and Kleinert, Thomas
            and Maciel Xavier, Pedro and Mohseni, Naeimeh
            and Montanez-Barrera, Jhon A. and Nakano, Koji
            and Nannicini, Giacomo and O'Meara, Corey and Pauckert, Justin
            and Proissl, Manuel and Ramesh, Anurag and Schicker, Maximilian
            and Shimada, Noriaki and Takeori, Mitsuharu and Valls, V{\'i}ctor
            and Van Bulck, David and Woerner, Stefan and Zoufal, Christa},
  title = {The Quantum Optimization Benchmarking Library},
  journal = {Nature Computational Science},
  year = {2026},
  volume = {6},
  number = {6},
  pages = {653--671},
  doi = {10.1038/s43588-026-00991-1}
}

\end{document}